\DeclarePairedDelimiter{\abs}{\lvert}{\rvert}
\newtheorem{theorem}{Theorem}
\newtheorem{lemma}[theorem]{Lemma}
\newtheorem{proposition}[theorem]{Proposition}
\title{Certified Robustness of Quantum Classifiers against Adversarial Examples through Quantum Noise}
\name{
\begin{tabular}{@{}c@{}}Jhih-Cing Huang$^{1}$\qquad Yu-Lin Tsai$^{2}$\qquad Chao-Han Huck Yang$^{3}$\qquad Cheng-Fang Su$^{2}$ \\Chia-Mu Yu$^{2}$\qquad Pin-Yu Chen$^{4}$\qquad Sy-Yen Kuo$^{1}$\end{tabular}}
\address{$^1$National Taiwan University, Taiwan\qquad $^2$National Yang Ming Chiao Tung University, Taiwan \\ $^3$ Georgia Institute of Technology, GA, USA\qquad $^4$IBM Research, NY, USA }
\begin{document}

\maketitle
\begin{abstract}
Recently, quantum classifiers have been found to be vulnerable to adversarial attacks, in which quantum classifiers are deceived by imperceptible noises, leading to misclassification. In this paper, we propose the \textbf{first theoretical study} demonstrating that \textbf{adding quantum random rotation noise can improve robustness} in quantum classifiers against adversarial attacks. We link the definition of differential privacy and show that the quantum classifier trained with the natural presence of additive noise is differentially private. Finally, we derive a certified robustness bound to enable quantum classifiers to defend against adversarial examples, supported by experimental results simulated with noises from IBM's 7-qubits device.
\end{abstract}
\section{Introduction}
\label{sec:introduction}

The joint study of quantum computing and machine learning opens a new research area named quantum machine learning \cite{preskill2018quantum}.
For example, quantum classifiers \cite{li2022recent, qi2022classical, yang2021decentralizing, biamonte2017quantum, yang2022bert} solve classification problems similar to the classical ones. Furthermore, the data availability of quantum classifiers is wider since quantum data, which is generated from natural or artificial quantum systems, can also be included.
In particular, quantum neural network (QNN) is a pure quantum model~\cite{benedetti2019generative, dallaire2018quantum, farhi2018classification, huggins2019towards, schuld2020circuit} that contains trainable parameters. Thus, QNN is also identified as a parameterized quantum circuit. Aside from classical machine learning, QNNs predict labels of data by measuring the ancilla qubits that are carried through along with the training data. Based on the training data, QNNs utilize optimizers similar to the classical ones to optimize parameters with specific techniques \cite{mitarai2018quantum, beer2020training, qi2022theoretical} to calculate the gradient. However, recent findings also suggest that the QNNs-based model is also sensitive to small gradient-based perturbation with malicious misclassification results. To deal with the adversarial examples on QNNs, Weber et al. \cite{optimal} find out a tight condition for the robustness against the adversarial perturbation for QNNs. Furthermore, Du et al. \cite{PhysRevResearch.3.023153} propose that QNN can resist against the adversarial perturbation via depolarization noise. 

\textbf{Contribution} In this paper, inspired by \cite{PhysRevResearch.3.023153}, we propose one  \textbf{formal theoretical analysis on robustness of QNNs}, where the robustness can be improved via the added quantum random rotation noise. The quantum rotation noise in our theoretical solution additionally applied to QNNs enables us to derive a certified robustness bound.

\subsection{Related Work}

\indent \textbf{Randomized Smoothing}
Classical randomized smoothing \cite{cohen2019certified} protects classical classifiers through adding noise, such as Gaussian Noise, and by sampling multiple times, the label with the highest probability is assigned to the data point; i.e., $g(x)=\underset{c\in \kappa}{\arg\max} P(f(x+\epsilon)=c)$, where $\kappa$ is the set of all possible classes and $\epsilon \sim N(0,\sigma^2 I)$.

\textbf{Differential Privacy}
Differential privacy \cite{dwork2006calibrating} is one of the \textit{de facto} standard in the realm of data privacy. Furthermore, differential privacy quantifies the amount of privacy protection. Formally, we can define the notion $\epsilon$-differential privacy in the following. Let $\epsilon$ be a positive real number and ${\displaystyle {\mathcal {A}}}$ be a randomized algorithm that takes a dataset as input. Let ${\displaystyle {\textrm {im}}\ {\mathcal {A}}}$ denote the image of ${\displaystyle {\mathcal {A}}}$. The algorithm ${\displaystyle {\mathcal {A}}}$ is said to provide ${\displaystyle \epsilon }$-differential privacy if, for all datasets ${\displaystyle D_{1}}$ and $ {\displaystyle D_{2}}$ that differ on a single element (i.e., the data of one person), and all subsets ${\displaystyle S}$ of $ {\displaystyle {\textrm {im}}\ {\mathcal {A}}}$, we have $\Pr[{\mathcal {A}}(D_{1})\in S]\leq e^{\epsilon}\cdot \Pr[{\mathcal {A}}(D_{2})\in S]$,
where the probability is taken over the randomness used by the algorithm.

\textbf{Certified Robustness} For a certifiably robust~\cite{zhang2018efficient, yang2021causal} classical classifier with robustness bound $d$, the predictions of input data points $x$ and $x^{'}$ are guaranteed to be the same where $x$ and $x^{'}$ are neighboring data points with $l_i$ distance of $x$ and $x^{'}$ less than threshold $d$. The definition of certified robustness of quantum classifiers with robustness bound $\tau_D$ is similar to classical one. 

Namely, for two quantum states $\sigma$ and $\rho$, if $||\sigma-\rho|| < \tau_D$, where the distance is defined by the trace distance(i.e. $||\sigma-\rho|| := \frac{Tr(||\sigma-\rho||)}{2}$), the majority of output labels of $\sigma$ and $\rho$ are guaranteed to be the same.

\textbf{QNN Robustness} Recently, some research efforts have been on developing the robustness for QNNs. For example, Weber et al. \cite{PhysRevResearch.4.033217} formulate this adversarial robustness topic on QNNs as semidefinite programs and considers higher statistical moments of the observable and generalized bounds. Weber et al. \cite{optimal} establsih a link between binary quantum hypothesis testing and provably robust QNNs, resulting in a robustness condition for the amount of noise a classifier can tolerate. Du et al. \cite{PhysRevResearch.3.023153} finds that the depolarization noise in QNNs helps derive a robustness bound, where the robustness improves with increasing noise.

\section{Background Knowledge}
\textbf{Quantum Classifier} Parameterized quantum circuits are quantum frameworks that depend on trainable parameters, and can therefore be optimized \cite{benedetti2019parameterized}. Variational quantum classifier algorithm is under this framework, being the predominant basis of quantum classifiers. Several optimization methods are developed and different quantum classifiers are proposed. In our work, we use the optimization method called parameter-shift rule \cite{mitarai2018quantum}. That is, the output of a variational quantum circuit, denoted by $f(\theta)$, is parameterized by $\theta=\theta_1,\theta_2,\dots$.

To optimize $\theta$, we need to acquire the partial derivative of $f(\theta)$ which can be expressed as a linear combination of other quantum functions, typically derived from the same circuit with a shift of $\theta$. That is, the same variational quantum circuit can be used to compute the partial derivatives of  $f(\theta)$. Besides, we need to encode our classical data and int this aspect we adopt the amplitude encoding method \cite{schuld2017implementing, mottonen2004transformation} . To encode data efficiently, amplitude encoding is to transform classical data into linear combination of independent quantum states with the magnitudes of features being the weights which can be expressed as $S_x\ket{0}=\frac{1}{\abs{x}}\sum_{n=1}^{2^n}x_i\ket{i}$, where each $x_i$ is a feature (component) of data point $x$, and ${\ket{i}}$ is a basis of $n$-qubit space. In our work, we assume a $K$-class quantum classifier of which the output is the predicted label of the input state.

Let $\Pi_k$ be a positive operator-valued measure (POVM) and $\mathcal{E}$ be quantum operations of the quantum classifier. Define $y_k(\sigma) \equiv Tr(\Pi_k\mathcal{E} (\sigma \otimes \ket{a}\bra{a}))$ which denotes the probability with which input state $\sigma$ is assigned to the label $k, k \in {0,1,2,....,K-1}.$ and  $\overset{\sim}{y_k}(\sigma) = Tr(\Pi_k\mathcal{E} (R\sigma \otimes \ket{a}\bra{a}))$ which denotes the probability with which input state $\sigma$ is assigned to the label $k, k \in {0,1,2,....,K-1}$ under noise, where $R$ is the noise operator. Since it is impossible to derive actual $y_k(\sigma)$ and $\overset{\sim}{y_k}(\sigma)$, we sample N times to estimate $y_k$ and $\overset{\sim}{y_k}$ with $y_k^{(N)}(\sigma)$ and $\overset{\sim}{y_k}^{(N)}(\sigma)$, respectively. In our work, we assume $K=2$ (binary classification) for convenience but similar reasoning can also be applied in the multiclass scenario.

\textbf{Quantum Differential Privacy}
Similar to classical $\epsilon$-differential privacy, we adopt the quantum version of $\epsilon$-differential privacy from \cite{zhou2017differential}. Furthermore, we express a quantum classifier under $K$-class classification problem to have satisfied $\epsilon$-differential privacy if the following holds.

Let $\epsilon$ be a positive real number and ${\displaystyle {\mathcal {M}}}$ be a quantum algorithm that takes a quantum state as input. The algorithm ${\displaystyle {\mathcal {M}}}$ is said to provide quantum ${\displaystyle \epsilon}$-differential privacy if, for all input quantum states $\sigma$ and $ \rho$ such that $\tau(\sigma,\rho)<\tau_D$, and for all $\Pi_i, i \in {0,1,2,3...,K-1}$, we have ${\displaystyle \Pr[{\mathcal {M}}(\sigma,\Pi_i)]\leq \exp \left(\epsilon \right)\cdot \Pr[{\mathcal {M}}(\rho,\Pi_i)]}$, and therefore $e^{-\epsilon} \leq \frac{\overset{\sim}{y_k}(\rho)}{\overset{\sim}{y_k}(\sigma)} \leq e^{\epsilon}$.

\section{Proposed Method}
We begin with the idea of simulating randomized smoothing in quantum machine learning. We aim to add perturbation on qubits and consider random rotations on the Bloch sphere as a counterpart of randomized smoothing. Then, we apply rotation gates, as shown in Fig. \ref{fig:circuit}, on each input qubit and set up rotation angles with random variables generated from classical computers.

\begin{figure}[ht]
\centering
        \includegraphics[scale=0.2]{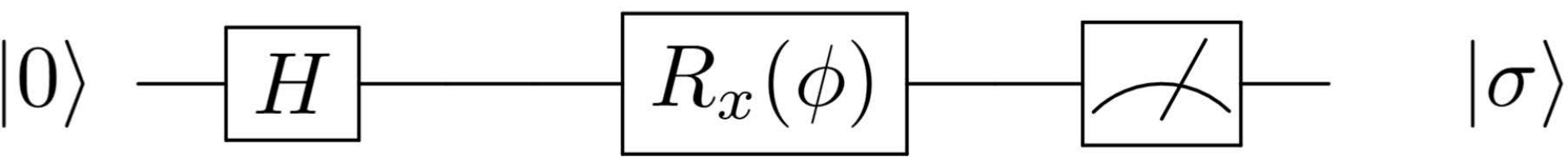}
        \caption{The rotation circuit with output density matrix ($\sigma$). }
\label{fig:circuit}
\end{figure}
\vspace{-2mm}

Our proposed method is summarized in Algorithm~\ref{tab:alg}. Our method does not assume details of quantum classifiers, and thus is general for model agnostic. Our method guarantees the accuracy of original quantum classifiers, and the corresponding robustness bound is also applicable for all kinds of quantum classifiers. Further analysis of Algorithm~\ref{tab:alg} is also proven in subsequent section.

\begin{algorithm}
    \hspace*{\algorithmicindent} \textbf{Input} $\sigma$: where $\sigma$ is density matrix of n-dim data.  \\
    \hspace*{\algorithmicindent} \textbf{Output}
    $f(\theta^*, \sigma)$
    \begin{algorithmic}[H]
    \State 1. For a chosen quantum classifier, add Pauli-X operators before each input qubit.
    \State 2. Generate n random variables $\theta_{1}, \theta_{2}, ..., \theta_{n}$ subject to $0<h_1 < \tan\theta_i < h_2$ for all $i\in\{1, 2, \dots, n\}$.
    \State 3. Set up rotation angles of additional Pauli-X operators with $\theta_{1}, \theta_{2}, ..., \theta_{n}$
    \State 4. Execute the quantum classifier $N$ times to get the score vector $f(\theta^*, \sigma)$.
    \end{algorithmic}
    \vspace*{1em}
    \caption{Quantum model under quantum noise rotation}
    \label{tab:alg}
\end{algorithm}
\vspace{-2mm}
\section{Theoretical Analysis}
\label{Analysis}
Our goal is to demonstrate that random rotation noises can be used to protect quantum classifiers against adversarial perturbations. This can be divided into three main steps. We first show the invariance of outcomes between noisy classifiers and original ones. Then we demonstrate how random rotation noises improve quantum differential privacy for the classifiers. Ultimately, we can show the connection between the differential privacy and the better robustness against general adversaries of classifiers.

\subsection{Accuracy of Noisy Classifiers}
In this section, we will start with the derivation of the accuracy given the noisy quantum classifiers whose output will not be affected under a known noise source; therefore, we might use it for the improvement of the privacy. We should emphasize that robustness bounds against an unknown adversary belong to the worst-case scenario, while those against a commonly known noise source are far from the worst-case scenario. 
\begin{lemma}
\label{lemma1}
Let $\sigma$ denote a input quantum state and $y_k(\sigma)$ denote the output for the noiseless circuit. Then for binary case, if the class label $C$ is assigned to $\sigma$ by the noiseless circuit with $y_C(\sigma) > \frac{(1+h)^n}{2}$, i.e., $C = \underset{k}{argmax}\hspace{1mm} y_k(\sigma)$, then the same label is also assigned by the noisy circuit. This means $ \underset{k}{argmax} \hspace{1mm} \overset{\sim}{y_k}(\sigma) = C $ for any $\sigma$. %
\end{lemma}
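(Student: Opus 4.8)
The plan is to show that the random Pauli-X rotations can only distort each label probability by a bounded multiplicative factor, and that the threshold $\tfrac{(1+h)^n}{2}$ is calibrated precisely so that this distortion cannot dislodge the noiseless winner from the majority position that decides the binary vote. Throughout, I would write $\tilde{y}_k(\sigma)$ for the noisy score $\overset{\sim}{y_k}(\sigma)$ defined in the background section.

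First I would make the noise operator explicit. By Algorithm~\ref{tab:alg}, $R = \bigotimes_{i=1}^{n} R_i$ is a tensor product of single-qubit Pauli-X rotations, one per input qubit, whose angles satisfy the sure constraint $0 < h_1 < \tan\theta_i < h_2$. Because the classifier channel $\mathcal{E}$ and the map $\rho \mapsto Tr(\Pi_k\,\rho)$ are both linear, I would treat $\tilde{y}_k(\sigma) = Tr(\Pi_k\mathcal{E}(R\sigma \otimes \ket{a}\bra{a}))$ as a linear functional of the rotated input and compare it term by term against $y_k(\sigma) = Tr(\Pi_k\mathcal{E}(\sigma \otimes \ket{a}\bra{a}))$.

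The heart of the argument is a per-qubit distortion bound. For a single qubit the rotation $R_i$ mixes the $\ket{0}$ and $\ket{1}$ amplitudes with weights governed by $\cos\theta_i$ and $\sin\theta_i$; using $\tan\theta_i < h_2 =: h$ I would show that applying $R_i$ multiplies the corresponding contribution to any label probability by a factor lying in $[\,(1+h)^{-1},\,1+h\,]$. Since the $n$ rotations act on distinct tensor factors, these factors compound multiplicatively, giving the two-sided estimate
\[
\frac{y_k(\sigma)}{(1+h)^n} \;\le\; \tilde{y}_k(\sigma) \;\le\; (1+h)^n\, y_k(\sigma)
\]
for every label $k$. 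Crucially, because the constraint on $\tan\theta_i$ holds with probability one, this estimate is valid for \emph{every} realization of the random angles, so the conclusion will be deterministic rather than merely in expectation.

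I would then close the argument with the binary normalization $\tilde{y}_0(\sigma) + \tilde{y}_1(\sigma) = 1$. Applying the lower bound at $k = C$ together with the hypothesis $y_C(\sigma) > \tfrac{(1+h)^n}{2}$ yields $\tilde{y}_C(\sigma) \ge y_C(\sigma)/(1+h)^n > \tfrac{1}{2}$, whence $\tilde{y}_C(\sigma) > \tilde{y}_{1-C}(\sigma)$ and therefore $\arg\max_k \tilde{y}_k(\sigma) = C$. The main obstacle I anticipate is the per-qubit bound itself: one must verify that the change induced in each measurement probability is genuinely captured by a single factor $(1+h)$ and that these factors factorize across qubits rather than coupling through the state's entanglement, and one must confirm the correct identification of $h$ with the upper bound $h_2$ on $\tan\theta_i$ (with $h_1 > 0$ needed only to keep the injected noise nontrivial). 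Establishing that clean factorization — most naturally by bounding the effect of one $R_i$ at a time while holding the other factors fixed — is the step where the quantitative constant $(1+h)^n$ is actually earned.
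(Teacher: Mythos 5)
Your argument hinges on the two-sided multiplicative distortion bound $y_k(\sigma)/(1+h)^n \le \overset{\sim}{y_k}(\sigma) \le (1+h)^n\, y_k(\sigma)$, and that bound is not merely the hard step you flag at the end --- it is false in general. A rotation is a unitary that injects amplitude from other basis states, so the noisy score of a label is not controlled by that label's noiseless score at all. Concretely, take a single qubit with $\Pi_1=\ket{1}\bra{1}$ and input $\sigma=\ket{0}\bra{0}$: then $y_1(\sigma)=0$ while $\overset{\sim}{y_1}(\sigma)=\sin^2\theta>0$, violating your upper bound. Dually, choosing the input $\ket{\psi}=R^\dagger\ket{\phi^\perp}$ for a classifier measuring $\Pi_k=\ket{\phi}\bra{\phi}$ gives $\overset{\sim}{y_k}(\sigma)=0$ while $y_k(\sigma)>0$, violating the lower bound (which is the direction your final step actually uses). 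The underlying reason your proposed repair --- ``bounding the effect of one $R_i$ at a time while holding the other factors fixed'' --- cannot work is that measurement probabilities are quadratic in amplitudes: the effect of a rotation is not a positive scalar factor per qubit but involves signed interference cross terms, and for entangled inputs with general POVMs the probability does not factor across qubits in the first place.

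The paper's proof has a genuinely different, additive structure that sidesteps exactly this issue. It expands the noisy score as $\overset{\sim}{y_k}(\sigma)=g\,y_k(\sigma)+g\sum_{\ell=1}^n\{\prod_{i=1}^\ell \tan\theta_{x_i}\, y_k(\sigma_{x_i})\}$ with $g=\prod_i\cos\theta_{x_i}$, where the cross terms involve the scores of \emph{other} (flipped) states $\sigma_{x_i}$ and hence bear no proportionality to $y_k(\sigma)$. It then takes the worst case in which all cross-term mass feeds the competing label (all $y_2(\sigma_{x_i})=1$, so the label-$1$ cross terms vanish), bounds that adversarial contribution by $\prod_i(1+\tan\theta_{x_i})-1\le(1+h)^n-1$, and closes with the margin condition $y_1(\sigma)-y_2(\sigma)\ge(1+h)^n-1$ together with the normalization $y_1(\sigma)+y_2(\sigma)=1$, which is precisely where the threshold $(1+h)^n/2$ comes from. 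That your threshold numerically coincides with the paper's is a coincidence of algebra ($y_C/(1+h)^n>1/2 \iff y_C>(1+h)^n/2$), not evidence that the multiplicative mechanism is sound: the constant must be earned by bounding what the noise rotates \emph{in} from other states, not by rescaling what was already there.
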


The above result gives the robustness of quantum classifiers against random rotation noises if the theoretical probabilities $\overset{\sim}{y_k}(\sigma)$ has been accessed. Nevertheless, we can only sample the circuit finite times and obtain the estimated values $\overset{\sim}{y_k}^{(N)}(\sigma)$ for a N-time sampling. In order to guarantee the robustness against our noise to high probability, we then give the relation between required sampling complexity N and the magnitude of added noises related to its superposition. 

\emph{Proof.} Due to the limited space, we provide a proof of Lemma 1 in our Appendix~\ref{apx1}.

\begin{proposition}
\label{prop1}
Let the predicted classification label of $\sigma$ using the noiseless $K$-multiclass classification circuit be $C$.
This means we can define $\xi \equiv y_C(\sigma) - \underset{k\ne C}{max} \hspace{1mm} y_k(\sigma)$
where $\xi > 0$. In the corresponding noisy circuit, one samples the circuit $N$ times for each $k$ to obtain the estimates $\overset{\sim}{y_k}^{(N)}(\sigma)$. Then
$\sigma$ is also labelled C with probability at least $\beta$ if the sample complexity $ N \sim \frac{1}{8(\epsilon-(1+h)^n+1)^2}\ln(\frac{2}{1-\beta})$\\
\end{proposition}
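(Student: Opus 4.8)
The plan is to split the argument into a deterministic part, in which I show that the \emph{true} noisy probabilities $\overset{\sim}{y_k}(\sigma)$ still place class $C$ strictly on top by a definite margin, and a probabilistic part, in which I show that the finite-sample estimates $\overset{\sim}{y_k}^{(N)}(\sigma)$ fail to preserve this ranking only with small probability. The target complexity has the unmistakable shape of a Hoeffding bound --- the $\ln(2/(1-\beta))$ and the constant $8$ in the denominator --- so the probabilistic part will reduce to a single two-sided concentration inequality with failure probability set to $1-\beta$, and the whole task is to identify the correct effective margin feeding that inequality.

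First I would handle the deterministic part, whose goal is to show that the true noisy leading probability $\overset{\sim}{y_C}(\sigma)$ clears its decision threshold --- the runner-up class, or equivalently the value $1/2$ in the binary reduction --- by a definite amount $2a$ with $a \equiv \epsilon-(1+h)^n+1$. Starting from Lemma~\ref{lemma1}, which already guarantees label preservation once $y_C(\sigma) > (1+h)^n/2$, I would use the noiseless margin $\xi = y_C(\sigma) - \max_{k\ne C} y_k(\sigma)$ together with the quantum differential-privacy relation $e^{-\epsilon} \le \overset{\sim}{y_k}(\rho)/\overset{\sim}{y_k}(\sigma) \le e^{\epsilon}$ to quantify how far above threshold the noisy probability sits. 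The noise-amplification factor $(1+h)^n$ carried over from Lemma~\ref{lemma1} and the multiplicative privacy factor (linearized through $e^{\epsilon}\approx 1+\epsilon$) then combine to give the gap $2a = 2(\epsilon-(1+h)^n+1)$, and I would record the standing requirement $\epsilon > (1+h)^n - 1$ that keeps $a>0$ so the bound is non-vacuous.

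Next I would handle the probabilistic part. The deterministic step leaves me needing only that the single leading empirical estimate $\overset{\sim}{y_C}^{(N)}(\sigma)$ does not fall below its threshold. Since the true value exceeds the threshold by $2a$ and every per-shot outcome lies in $[0,1]$, a single two-sided Hoeffding bound gives a failure probability of at most $2\exp(-2N(2a)^2)=2\exp(-8Na^2)$; framing it as this one dominant comparison is also why the constant in the logarithm stays $2$ rather than growing to $2(K-1)$. Setting $2\exp(-8Na^2)\le 1-\beta$ and solving for $N$ yields $N\ge \tfrac{1}{8a^2}\ln\!\big(\tfrac{2}{1-\beta}\big)$, which is exactly the claimed rate.

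The hard part will be the deterministic margin step: pinning down why the half-margin is precisely $a=\epsilon-(1+h)^n+1$ rather than some other combination. This requires carefully composing the multiplicative privacy guarantee (which controls ratios of noisy probabilities) with the additive label-preservation condition of Lemma~\ref{lemma1} (which controls the noisy-versus-noiseless gap through $(1+h)^n$), being explicit about the linearization $e^{\epsilon}\approx 1+\epsilon$ and the regime $\epsilon>(1+h)^n-1$ in which the statement is meaningful. A secondary subtlety is justifying the reduction to a single two-sided comparison so that the constant inside $\ln(2/(1-\beta))$ is $2$ and independent of $K$.
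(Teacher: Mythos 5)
Your probabilistic half is essentially the paper's: a single two-sided Hoeffding bound with deviation twice the margin, failure probability $2\exp(-8Na^2)$, solved for $N$ to give $\frac{1}{8a^2}\ln\left(\frac{2}{1-\beta}\right)$. The gap is in your deterministic half. You read the ``$\epsilon$'' in the proposition's sample complexity as the differential-privacy budget of Lemma~\ref{lemma2} and plan to combine the DP ratio bound $e^{-\epsilon} \le \overset{\sim}{y_k}(\rho)/\overset{\sim}{y_k}(\sigma) \le e^{\epsilon}$ with Lemma~\ref{lemma1}. But that relation compares the noisy outputs of \emph{two different states} $\sigma$ and $\rho$ with $\tau(\sigma,\rho)\le\tau_D$; Proposition~\ref{prop1} involves no perturbed state $\rho$ and no $\tau_D$ at all, so there is nothing for the DP bound to act on, and the DP budget $\epsilon=\ln(1+\tau_D/t^n)$ is not even determined by the proposition's hypotheses. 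The ``$\epsilon$'' in the statement is, as the paper's own proof makes clear, a typo for $\xi$, the noiseless margin that the statement itself defines and would otherwise never use.

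The paper's deterministic step instead comes from the superposition expansion in the proof of Lemma~\ref{lemma1}, which relates the noisy and noiseless probabilities of the \emph{same} state $\sigma$: it yields $\eta \equiv \overset{\sim}{y_1}(\sigma)-\overset{\sim}{y_2}(\sigma) > g\,(\xi-(1+h)^n+1)$ with $g=\prod_i\cos\theta_{x_i}$, and it is this quantity $\xi-(1+h)^n+1$ that feeds the Hoeffding bound to give $N \sim \frac{1}{8(\xi-(1+h)^n+1)^2}\ln\left(\frac{2}{1-\beta}\right)$. Your own write-up concedes that deriving the half-margin $a=\epsilon-(1+h)^n+1$ is ``the hard part'' and leaves it unproven; since the DP route cannot produce it for the structural reason above (and no linearization $e^{\epsilon}\approx 1+\epsilon$ can conjure a single-state margin out of a two-state privacy guarantee), the core of the proof is missing. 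To repair it, discard the DP step, replace it with the margin inequality extracted from Lemma~\ref{lemma1}'s proof, and keep your Hoeffding step unchanged.
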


\begin{proof}
We begin with Hoeffding inequality,
\begin{align*}
    Pr(|\frac{1}{N}\sum_{i=1}^{N} Z_i - E(Z_i)| \leq \zeta) \geq 1-2\exp(-\frac{2N\zeta^2}{(b-a)^2}),
\end{align*}
where $Z_i \in [a,b]$.\\
Here, $a=0$, $b=1$, $\zeta = 2\eta$, $\frac{1}{N}\sum_{i=1}^{N} Z_i = \overset{\sim}{y_k}^{(N)}(\sigma)$ and $E(Z_i)=\overset{\sim}{y_k}$. If we require the probability $Pr(|\overset{\sim}{y_k}^{(N)}(\sigma)-\overset{\sim}{y_k}| < 2\eta) \geq \beta$, it is sufficient to require \\
$N \sim \frac{1}{8\eta^2}\ln(\frac{2}{1-\beta})$, where
$\eta = \overset{\sim}{y_1}(\sigma) - \overset{\sim}{y_2}(\sigma) > g(\xi-(1+h)^n+1)$ following the result of Lemma \ref{lemma1}. Therefore, it implies that
\begin{align}
N \sim \frac{1}{8(\xi-(1+h)^n+1)^2}\ln(\frac{2}{1-\beta}).
\end{align}
\label{proof_lemma}
\end{proof}

\subsection{Relation between Noise Magnitude and Quantum Differential Privacy}
\label{sec:analysis 2}

By adding noise to quantum classifiers, the trained model will obey differential privacy. We now prove the relation between noise magnitude and quantum $\epsilon$-differential privacy.
\begin{lemma}
\label{lemma2}
Let the algorithm $M$ correspond to the $K$-multiclass classification circuit with random rotation noise channels and measurement operators $\{\Pi_k\}_{k=1}^{K}$. Then for two quantum test states $\sigma$ and $\rho$ obeying $\tau(\rho, \sigma) \leq \tau_D$ with $0 \leq \tau_d \leq 1$, $M$ satisfies $\epsilon$-quantum differential privacy where
$\epsilon = \ln{(1+\frac{\tau_D}{t^n})}$.\\
\end{lemma}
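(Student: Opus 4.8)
The plan is to verify the defining inequality of quantum $\epsilon$-differential privacy head-on: for every measurement operator $\Pi_k$ and every pair $\sigma,\rho$ with $\tau(\sigma,\rho)\le\tau_D$, show that $\overset{\sim}{y_k}(\sigma)\le e^{\epsilon}\,\overset{\sim}{y_k}(\rho)$ with $\epsilon=\ln(1+\tau_D/t^n)$. Since $e^{\epsilon}=1+\tau_D/t^n$, this is equivalent to the identity-plus-remainder estimate
\[
\frac{\overset{\sim}{y_k}(\sigma)}{\overset{\sim}{y_k}(\rho)}
=1+\frac{\overset{\sim}{y_k}(\sigma)-\overset{\sim}{y_k}(\rho)}{\overset{\sim}{y_k}(\rho)}
\le 1+\frac{\tau_D}{t^n}.
\]
So the whole lemma cleanly decouples into two independent bounds: an upper bound of $\tau_D$ on the numerator $\overset{\sim}{y_k}(\sigma)-\overset{\sim}{y_k}(\rho)$ and a uniform lower bound of $t^n$ on the denominator $\overset{\sim}{y_k}(\rho)$.

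For the numerator I would invoke contractivity of the whole inference pipeline under trace distance. Writing $\overset{\sim}{y_k}(\sigma)-\overset{\sim}{y_k}(\rho)=Tr\big(\Pi_k\,\mathcal{E}(R\sigma\otimes\ket{a}\bra{a})\big)-Tr\big(\Pi_k\,\mathcal{E}(R\rho\otimes\ket{a}\bra{a})\big)$ and using $0\le\Pi_k\le I$, the two outcome probabilities differ by at most the trace distance of the processed states. The random-rotation noise $R$ is a mixture of unitary channels and $\mathcal{E}$ is a quantum channel, so both are trace-distance nonincreasing, while tensoring with the fixed ancilla $\ket{a}\bra{a}$ leaves the distance unchanged. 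Chaining these steps gives $\abs{\overset{\sim}{y_k}(\sigma)-\overset{\sim}{y_k}(\rho)}\le\tau(\sigma,\rho)\le\tau_D$. This is essentially the standard quantum data-processing argument and should go through routinely.

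The substantive step, and the one I expect to be the main obstacle, is the uniform lower bound $\overset{\sim}{y_k}(\rho)\ge t^n$ over all inputs $\rho$ and all labels $k$; this is exactly where the injected noise earns its keep. I would first compute the single-qubit averaged rotation channel $E_{\theta}\!\left[R_X(\theta)(\cdot)R_X(\theta)^{\dagger}\right]$ over angles constrained by $h_1<\tan\theta_i<h_2$, and extract from it a floor $t$ below which no basis population can be driven, expressed in terms of the admissible range of $\tan\theta$. Tensoring across the $n$ independently perturbed qubits then multiplies these per-qubit floors into the simultaneous bound $t^n$, mirroring the product structure $(1+h)^n$ already appearing in Lemma~\ref{lemma1}. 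The care required here is twofold: (i) identifying the correct constant $t$ from the noise parameters and verifying it is the genuine worst case over $\rho$, and (ii) confirming that this floor is not destroyed by the subsequent action of $\mathcal{E}$ and the POVM $\{\Pi_k\}$ — i.e.\ that the mass guaranteed by the noise layer survives projection onto each label subspace rather than only the noise channel itself. Once $\overset{\sim}{y_k}(\rho)\ge t^n$ is established, substituting both bounds into the displayed identity yields the claimed $\epsilon=\ln(1+\tau_D/t^n)$, and exchanging the roles of $\sigma$ and $\rho$ delivers the matching $e^{-\epsilon}$ inequality, completing the two-sided differential-privacy guarantee.
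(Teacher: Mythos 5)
Your decomposition is exactly the paper's. The published proof writes
\begin{align*}
\frac{\overset{\sim}{y_k}(\rho)}{\overset{\sim}{y_k}(\sigma)}-1
=\frac{\overset{\sim}{y_k}(\rho-\sigma)}{\overset{\sim}{y_k}(\sigma)}
\leq \frac{\tau_D\, Tr(\Pi_k)}{t^n}=e^{\epsilon}-1,
\end{align*}
which is precisely your identity-plus-remainder split into two independent estimates: a numerator bound of order $\tau_D$ and a denominator floor of $t^n$. For the numerator, the paper invokes the inequality $Tr(U(\rho-\sigma)U^{\dagger}\Lambda_k)\leq \tau_D\,Tr(\Lambda_k)$, while you invoke contractivity of trace distance under the channels $R$, $\mathcal{E}$ together with $0\le \Pi_k\le I$; these are the same standard fact (the operational meaning of trace distance) dressed up in two ways, and both go through. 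So in structure and in the numerator step, your proposal and the paper coincide.

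The divergence is in the denominator, and it cuts in your favor. The paper never establishes $\overset{\sim}{y_k}(\sigma)\ge t^n$: its proof simply divides by $t^n$ as though that floor were given, and the quantity $t$ is never defined or related to the noise parameters $h_1,h_2$ anywhere in the text. You correctly flag this floor as the substantive step, and your concern (ii) --- whether mass guaranteed by the rotation layer survives the downstream action of $\mathcal{E}$ and projection onto each $\Pi_k$ --- is a genuine unresolved problem rather than a technicality. A random rotation is a mixture of \emph{unitaries}, and unitary noise alone cannot bound outcome probabilities away from zero model-agnostically: one can choose $\mathcal{E}$ and $\Pi_k$ (for instance, $\Pi_k$ supported on ancilla states orthogonal to $\ket{a}$) so that $\overset{\sim}{y_k}(\rho)=0$ for some inputs, in which case no positive floor $t^n$ exists and the claimed $\epsilon$ degenerates. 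This is exactly where the rotation-noise setting differs from the depolarizing-noise setting of Du et al.\ that the paper adapts: depolarizing noise applied before measurement injects a fixed multiple of the maximally mixed state, yielding an input-independent floor proportional to $Tr(\Pi_k)$ that survives any classifier, whereas input-side unitary noise does not. So the obstacle you identified is real, cannot be closed for arbitrary $\mathcal{E}$ and $\{\Pi_k\}$ without additional assumptions, and is left equally unaddressed by the paper's own proof --- your plan makes the gap visible rather than introducing it.
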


\begin{proof}
If the relation,
$e^{-\epsilon} \leq \frac{\overset{\sim}{y_k}(\rho)}{\overset{\sim}{y_k}(\sigma)}\leq e^{\epsilon}$ holds for every $k$, it can be derived that
\begin{align*}
    \frac{\overset{\sim}{y_k}(\rho)}{\overset{\sim}{y_k}(\sigma)}-1&=\frac{\overset{\sim}{y_k}(\rho)-\overset{\sim}{y_k}(\sigma)}{\overset{\sim}{y_k}(\sigma)}\\
    &=\frac{\overset{\sim}{y_k}(\rho-\sigma)}{\overset{\sim}{y_k}(\sigma)}\leq \frac{\tau_D T_r(\pi_k)}{t^n}=e^{\epsilon}-1
\end{align*}
with the inequality $Tr(U(\rho-\sigma)U^{\dagger}\Lambda_k) \leq \tau_DTr(\Lambda_k)= \tau_D$. The result above implies 
$\epsilon=\ln{(1+\frac{\tau_D}{t^n }})$.
\end{proof}

From Lemma \ref{lemma1}, we can find that the \textbf{privacy budget $\epsilon$ of the quantum classifier is inversely proportional to the lower bound of noise magnitude}.

\subsection{Connection between Quantum Differential Privacy and Certified Robustness}
Following the result of Lemma \ref{lemma1} and Lemma \ref{lemma2}, we proceed to derive the relationship between quantum differential privacy and certified robustness.
\begin{theorem}
\label{theo1}
(Infinite sampling case). We begin with our $K$-multiclass classification circuit with random rotation noise.
Let infinite sampling of the output be allowed, we can find
$\overset{\sim}{y_k}(\rho)$ for $k = 0, ..., k-1$ for any test state $\rho$ given. Suppose $\overset{\sim}{y_C}(\sigma) > e^{2\epsilon} max_{k \ne C}\overset{\sim}{y_k}(\sigma) $ holds, where $\epsilon = \ln{(1+\frac{\tau_D}{t^n})}$, which implies that $\sigma$ is assigned the class label $C$, i.e.,  $C = \underset{k}{argmax} \hspace{1mm} \overset{\sim}{y_k}(\sigma) = \underset{k}{argmax}\hspace{1mm} y_k(\sigma)$
Then $\rho$ is also labelled as $C$, i.e., $C = \underset{k}{argmax} \hspace{1mm} \overset{\sim}{y_k}(\rho) = \underset{k}{argmax}\hspace{1mm} y_k(\rho)$ for any $\rho$ where $\tau(\rho, \sigma) \leq \tau_D$ with $0 \leq \tau_d \leq 1$.
\end{theorem}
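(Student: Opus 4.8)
The plan is to run the standard differential-privacy-implies-robustness argument in the quantum setting, using Lemma~\ref{lemma2} as the engine. The starting observation is that Lemma~\ref{lemma2} already certifies that the noisy circuit $M$ is $\epsilon$-quantum differentially private with $\epsilon = \ln(1 + \frac{\tau_D}{t^n})$ whenever $\tau(\rho,\sigma) \le \tau_D$. Unpacking the quantum DP definition stated in the background, this means that for every class index $k$ and every pair of states within trace distance $\tau_D$ we have the two-sided multiplicative bound $e^{-\epsilon} \le \frac{\overset{\sim}{y_k}(\rho)}{\overset{\sim}{y_k}(\sigma)} \le e^{\epsilon}$. I would emphasize that both directions of this inequality are needed, so I would first confirm that Lemma~\ref{lemma2} is symmetric in $\sigma$ and $\rho$ (it is, since the trace distance and the bound $Tr(U(\rho-\sigma)U^{\dagger}\Lambda_k) \le \tau_D$ are symmetric), giving the lower bound on the winning class and the upper bound on the runners-up simultaneously.

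Next I would combine the margin hypothesis with these two bounds. Applying the lower bound to the predicted class $C$ gives $\overset{\sim}{y_C}(\rho) \ge e^{-\epsilon}\,\overset{\sim}{y_C}(\sigma)$, while applying the upper bound to each competitor $k \ne C$ gives $\overset{\sim}{y_k}(\rho) \le e^{\epsilon}\,\overset{\sim}{y_k}(\sigma) \le e^{\epsilon}\max_{j\ne C}\overset{\sim}{y_j}(\sigma)$, and taking the maximum over $k \ne C$ preserves this. Feeding in the assumption $\overset{\sim}{y_C}(\sigma) > e^{2\epsilon}\max_{k\ne C}\overset{\sim}{y_k}(\sigma)$ then yields the chain
\begin{align*}
\overset{\sim}{y_C}(\rho) \ge e^{-\epsilon}\,\overset{\sim}{y_C}(\sigma) > e^{-\epsilon}e^{2\epsilon}\max_{k\ne C}\overset{\sim}{y_k}(\sigma) = e^{\epsilon}\max_{k\ne C}\overset{\sim}{y_k}(\sigma) \ge \max_{k\ne C}\overset{\sim}{y_k}(\rho),
\end{align*}
which is exactly the statement that $C = \arg\max_k \overset{\sim}{y_k}(\rho)$. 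The factor $e^{2\epsilon}$ in the hypothesis is thus seen to be precisely what is consumed by the two separate applications of the DP bound, one to protect the margin of the winning class and one to control its competitors.

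Finally, to obtain the last equality $\arg\max_k \overset{\sim}{y_k}(\rho) = \arg\max_k y_k(\rho)$ I would invoke Lemma~\ref{lemma1}, which already guarantees that the noisy circuit reproduces the noiseless label under the relevant margin condition, transferring the conclusion from the noisy probabilities back to the noiseless ones. I expect the routine chaining of inequalities to be straightforward; the main obstacle is conceptual rather than computational, namely making sure the hypotheses of Lemma~\ref{lemma2} genuinely hold for the pair $(\sigma,\rho)$ — in particular that the $\tau_D$ controlling $\tau(\rho,\sigma)$ is the same one feeding the $\epsilon$ used in the margin — and noting that the competitor achieving the maximum for $\rho$ may differ from the one for $\sigma$, which is harmless here because the per-class upper bound holds uniformly over $k\ne C$, so the $\max_{k\ne C}$ step is legitimate.
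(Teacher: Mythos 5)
Your proof is correct and takes essentially the same approach as the paper: the paper prints no standalone proof of Theorem~\ref{theo1}, but its proof of Theorem~\ref{theo3} runs exactly this argument---Lemma~\ref{lemma2}'s two-sided differential-privacy bound combined with the $e^{2\epsilon}$ margin condition---specialized to the binary case. Your forward chain $\overset{\sim}{y_C}(\rho) \geq e^{-\epsilon}\overset{\sim}{y_C}(\sigma) > e^{\epsilon}\max_{k\neq C}\overset{\sim}{y_k}(\sigma) \geq \max_{k\neq C}\overset{\sim}{y_k}(\rho)$ is precisely the intended derivation.
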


That is, for two quantum states $\sigma$ and $\rho$ where $\rho$ is an adversarial example and $\tau(\sigma,\rho) \leq \tau_D$, the predicted label of $\rho$ will be identical to that of $\sigma$. When noise magnitude increases, the range of adversarial examples which can be defended will be enlarged, and therefore, robustness of the classifier will have been enhanced.
The following theorem shows a similar result still holds for finite sampling cases. 
\begin{theorem}
\label{theo2}
(Finite sampling case). Suppose one samples
the output of the circuit $N$ times for the estimation of each
$\underset{k}{argmax} \hspace{1mm} \overset{\sim}{y_k}(\sigma)$. Let $\overset{\sim}{y_C}(\sigma) - \zeta > e^{2\epsilon} max_{k \ne C}(\overset{\sim}{y_k}(\sigma)+\zeta) $  where
$\epsilon = \ln{(1+\frac{\tau_D}{t^n})}$ , which implies $\sigma$ has the
class label $C$. Then the class label of $\rho$ is also $C$, i.e.,$C = \underset{k}{argmax} \hspace{1mm} \overset{\sim}{y_k}(\rho) = \underset{k}{argmax}\hspace{1mm} y_k(\rho)$ for any $\rho$ to probability at
least $1 - 2 exp(-2N\zeta^2)$ for any $\rho$ where $0 \leq \tau_d \leq 1$. This
also implies $\overset{\sim}{y_C}(\sigma) + \zeta >  max_{k \ne C}(\overset{\sim}{y_k}(\sigma)-\zeta) $ to probability at least $1 - 2 exp(-2N\zeta^2)$. 
\end{theorem}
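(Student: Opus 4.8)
The plan is to upgrade the infinite-sampling certificate of Theorem~\ref{theo1} to the finite-sampling regime by absorbing the statistical fluctuation of the $N$-shot estimates into the extra $\zeta$ margin that the hypothesis builds into the decision inequality. Concretely, I would combine the multiplicative differential-privacy bound of Lemma~\ref{lemma2}, which controls how far the true acceptance probabilities of $\rho$ can drift from those of $\sigma$, with the additive Hoeffding concentration already used in the proof of Proposition~\ref{prop1}, which controls how far the empirical estimates $\overset{\sim}{y_k}^{(N)}$ can drift from the true $\overset{\sim}{y_k}$.

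First I would record the deterministic part. Since $\tau(\rho,\sigma)\le\tau_D$, Lemma~\ref{lemma2} gives $e^{-\epsilon}\overset{\sim}{y_k}(\sigma)\le\overset{\sim}{y_k}(\rho)\le e^{\epsilon}\overset{\sim}{y_k}(\sigma)$ for every $k$, so in particular $\overset{\sim}{y_C}(\rho)\ge e^{-\epsilon}\overset{\sim}{y_C}(\sigma)$ and $\overset{\sim}{y_k}(\rho)\le e^{\epsilon}\overset{\sim}{y_k}(\sigma)$ for $k\ne C$. Multiplying the hypothesis $\overset{\sim}{y_C}(\sigma)-\zeta>e^{2\epsilon}\big(\max_{k\ne C}\overset{\sim}{y_k}(\sigma)+\zeta\big)$ through by $e^{-\epsilon}$ and feeding in these two inequalities, I would obtain a genuine true-value gap for $\rho$,
\[
\overset{\sim}{y_C}(\rho)-\max_{k\ne C}\overset{\sim}{y_k}(\rho) > (e^{\epsilon}+e^{-\epsilon})\,\zeta \ge 2\zeta ,
\]
where the last step is just $e^{\epsilon}+e^{-\epsilon}\ge 2$. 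This is the key algebraic step: the factor $e^{2\epsilon}$ in the hypothesis is exactly what is needed so that, after the one-sided privacy distortion on each side, a clean $2\zeta$ separation survives for $\rho$.

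With a $2\zeta$ gap in hand, the probabilistic part is routine. I would apply the Hoeffding inequality to the $N$-shot estimate of the classified state: since each shot returns an outcome in $[0,1]$, $\Pr\big(|\overset{\sim}{y_k}^{(N)}(\rho)-\overset{\sim}{y_k}(\rho)|\le\zeta\big)\ge 1-2\exp(-2N\zeta^2)$. On this good event the empirical winner satisfies $\overset{\sim}{y_C}^{(N)}(\rho)\ge\overset{\sim}{y_C}(\rho)-\zeta>\max_{k\ne C}\overset{\sim}{y_k}(\rho)+\zeta\ge\max_{k\ne C}\overset{\sim}{y_k}^{(N)}(\rho)$, so $\arg\max_k\overset{\sim}{y_k}^{(N)}(\rho)=C$, which is the claimed conclusion with probability at least $1-2\exp(-2N\zeta^2)$. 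The trailing identity $\overset{\sim}{y_C}(\sigma)+\zeta>\max_{k\ne C}(\overset{\sim}{y_k}(\sigma)-\zeta)$ then follows from the same concentration bound applied to $\sigma$, since the hypothesis already forces $\overset{\sim}{y_C}(\sigma)>\max_{k\ne C}\overset{\sim}{y_k}(\sigma)$.

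I expect the main obstacle to be bookkeeping rather than any deep idea: one must keep the one-sided multiplicative privacy bounds $e^{\pm\epsilon}$ pointing in the directions that widen, rather than erode, the margin, and verify that a single Hoeffding tail $2\exp(-2N\zeta^2)$ suffices. The latter is clean in the stated binary case $K=2$, where $\overset{\sim}{y_0}+\overset{\sim}{y_1}=1$ collapses the two classes to a single comparison; for a genuine $K$-class claim one would need a union bound over the $K-1$ competitors, degrading the failure probability by a factor $K-1$, so I would flag that the stated bound is really the per-comparison (binary) statement.
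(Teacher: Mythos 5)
The paper never actually proves Theorem~\ref{theo2}: it is stated without proof (only Lemma~\ref{lemma1}, Proposition~\ref{prop1}, Lemma~\ref{lemma2}, and Theorem~\ref{theo3} receive proofs), so there is no paper argument to compare against and your proposal must be judged on its own merits. On those terms it is correct, and it assembles exactly the ingredients the paper has on hand: the multiplicative bounds $e^{-\epsilon}\overset{\sim}{y_k}(\sigma)\leq\overset{\sim}{y_k}(\rho)\leq e^{\epsilon}\overset{\sim}{y_k}(\sigma)$ from Lemma~\ref{lemma2}, and the Hoeffding tail $2\exp(-2N\zeta^2)$ already used in Proposition~\ref{prop1}. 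Your central computation checks out: expanding the hypothesis, multiplying by $e^{-\epsilon}$, and applying the privacy bounds in the two favorable directions gives
\begin{align*}
\overset{\sim}{y_C}(\rho)-\max_{k\ne C}\overset{\sim}{y_k}(\rho) > \left(e^{\epsilon}+e^{-\epsilon}\right)\zeta \geq 2\zeta,
\end{align*}
after which one Hoeffding application delivers the empirical decision for $\rho$ with the stated confidence; your observation that a single tail suffices only in the binary case (where $\overset{\sim}{y_0}+\overset{\sim}{y_1}=1$ collapses the comparison), and that a genuine $K$-class claim costs a union-bound factor of $K-1$, is also right and is a point the paper itself glosses over.

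Two caveats. First, an interpretation issue: in the work this paper follows, Du et al.~\cite{PhysRevResearch.3.023153}, the $\zeta$-padded hypothesis is meant to be checked on the \emph{empirical} estimates $\overset{\sim}{y_k}^{(N)}(\sigma)$ --- that is the entire reason the theorem has one sample $\sigma$ at all --- and Hoeffding is spent passing from the empirical condition to the true condition $\overset{\sim}{y_C}(\sigma)>e^{2\epsilon}\max_{k\ne C}\overset{\sim}{y_k}(\sigma)$, at which point Theorem~\ref{theo1} finishes the job. You instead read the hypothesis on the true values of $\sigma$ and spend the Hoeffding budget on the estimates for $\rho$. Both readings yield valid theorems with the same tail bound; they simply locate the statistical error on different states, but yours leaves unexplained what role sampling $\sigma$ plays in the theorem's premise. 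Second, neither your argument nor, to be fair, the theorem statement itself accounts for the final equality $\underset{k}{argmax}\hspace{1mm}\overset{\sim}{y_k}(\rho)=\underset{k}{argmax}\hspace{1mm}y_k(\rho)$: identifying the noisy label of $\rho$ with its noiseless label requires Lemma~\ref{lemma1} applied to $\rho$, including verification of its hypothesis $y_C(\rho)>\frac{(1+h)^n}{2}$, which your proof, like the paper's statement, leaves implicit.
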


Accordingly, we can utilize above lemmas and theorems to find $\tau_D$ which represents the range of durable adversarial perturbation.

\begin{theorem}
\label{theo3}
Define $B \equiv \frac{\overset{\sim}{y_0}(\sigma)}{\overset{\sim}{y_1}(\sigma)}$. With random rotation noise, the classifier is robust under any adversarial perturbation $\sigma \rightarrow \rho$ such that $\tau(\sigma,\rho) \leq \tau_D$, if $B>e^{2\epsilon}$.
\end{theorem}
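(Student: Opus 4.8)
The plan is to recognize that Theorem~\ref{theo3} is the binary specialization of the infinite-sampling robustness guarantee in Theorem~\ref{theo1}, and to prove it directly from the differential-privacy bound established in Lemma~\ref{lemma2}. Set $K=2$ and, without loss of generality, suppose the noisy classifier assigns label $C=0$ to $\sigma$, so that $\tilde{y}_0(\sigma) > \tilde{y}_1(\sigma)$ and $B = \tilde{y}_0(\sigma)/\tilde{y}_1(\sigma) > 1$. The goal is then to show that the same label $0$ is assigned to every $\rho$ with $\tau(\sigma,\rho) \le \tau_D$, i.e.\ that $\tilde{y}_0(\rho) > \tilde{y}_1(\rho)$.

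First I would invoke Lemma~\ref{lemma2}, which guarantees that $M$ is $\epsilon$-quantum differentially private with $\epsilon = \ln(1 + \tau_D/t^n)$, so that $e^{-\epsilon} \le \tilde{y}_k(\rho)/\tilde{y}_k(\sigma) \le e^{\epsilon}$ holds for every $k$. The key step is to apply this two-sided bound \emph{asymmetrically}: I would use the lower bound on the winning label, $\tilde{y}_0(\rho) \ge e^{-\epsilon}\tilde{y}_0(\sigma)$, together with the upper bound on the competing label, $\tilde{y}_1(\rho) \le e^{\epsilon}\tilde{y}_1(\sigma)$. Chaining these gives $\tilde{y}_0(\rho) - \tilde{y}_1(\rho) \ge e^{-\epsilon}\tilde{y}_0(\sigma) - e^{\epsilon}\tilde{y}_1(\sigma)$, and this right-hand side is strictly positive exactly when $\tilde{y}_0(\sigma)/\tilde{y}_1(\sigma) > e^{2\epsilon}$, i.e.\ when $B > e^{2\epsilon}$. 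Hence the hypothesis $B > e^{2\epsilon}$ forces $\tilde{y}_0(\rho) > \tilde{y}_1(\rho)$, so $\rho$ receives label $0$ and the prediction is unchanged, which is precisely the claimed robustness.

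Since this is essentially a direct corollary of Lemma~\ref{lemma2}, there is no deep obstacle; the only point requiring care is the simultaneous, opposite-direction use of the privacy bound (loosening the true-label score from below while loosening the competitor from above), which is exactly what produces the factor $e^{2\epsilon}$ rather than $e^{\epsilon}$. I would also confirm that the denominators are nonzero so that $B$ is well defined and the ratios in Lemma~\ref{lemma2} are meaningful; in the binary complete-measurement case $\tilde{y}_1(\sigma) = 1 - \tilde{y}_0(\sigma) > 0$ under the nontrivial assumption that label $0$ does not already carry probability one, so this is harmless. The same chaining underlies the finite-sampling version (Theorem~\ref{theo2}) once the exact scores $\tilde{y}_k(\sigma)$ are replaced by their Hoeffding confidence intervals $\tilde{y}_k(\sigma) \pm \zeta$.
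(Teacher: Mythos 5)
Your proposal is correct and follows essentially the same route as the paper: invoke Lemma~\ref{lemma2} to get the two-sided $\epsilon$-differential-privacy bounds, then apply them asymmetrically (lower bound on the winning score, upper bound on the competitor) so that $B > e^{2\epsilon}$ forces $\overset{\sim}{y_0}(\rho) > \overset{\sim}{y_1}(\rho)$. The only addition in the paper is the final inversion of the condition, substituting $\epsilon = \ln(1+\frac{\tau_D}{t^n})$ to express the certified radius explicitly as $\tau_D < (\sqrt{B}-1)t^n$; your argument, stated with the cleaner logical direction, delivers the theorem as written.
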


Ultimately, for a binary classifier, if $\frac{\overset{\sim}{y_0}(\sigma)}{\overset{\sim}{y_1}(\sigma)} > e^{2\epsilon}$, both Theorem \ref{theo1} and Theorem \ref{theo2} are applicable. Hence, we can derive certified robustness of the classifier by deriving $\tau_D$.
\begin{proof}
From Lemma \ref{lemma2}, we prove that the noisy classifier satisfies $\epsilon$-quantum differential privacy where $\epsilon=\ln(1+\frac{\tau_D}{t^n})$. It implies that
\begin{align*}
     e^{-\epsilon}\overset{\sim}{y_0}(\sigma)\leq\overset{\sim}{y_0}(\rho)\leq e^{\epsilon}\overset{\sim}{y_0}(\sigma)\\
e^{-\epsilon}\overset{\sim}{y_1}(\sigma)\leq\overset{\sim}{y_1}(\rho)\leq e^{\epsilon}\overset{\sim}{y_1}(\sigma).
\end{align*}

Without loss of generality, we assume for adversarial example $\rho$, $ C=0$, that is, $\underset{k}{argmax} \hspace{1mm} \overset{\sim}{y_k}(\rho)=0$. It then can be showed that
$\overset{\sim}{y_0}(\rho)>\overset{\sim}{y_1}(\rho)=1-\overset{\sim}{y_0}(\rho)$,
which implies that
\begin{align*}
     \frac{\overset{\sim}{y_0}(\sigma)}{\overset{\sim}{y_1}(\sigma)}>e^{2\epsilon}=(1+\frac{\tau_D}{t^n})^2.
\end{align*}

Here, we define 
$B \equiv \frac{\overset{\sim}{y_0}(\sigma)}{\overset{\sim}{y_1}(\sigma)}$ which can be obtained from experiments. With the definition of $B$, equation above turns into
\begin{align*}
    \tau_D < (\sqrt{B}-1)t^n.
\end{align*}
\end{proof}

\section{Experimental Results}
In our experiment, the dimension of input data is $3$ following the setting in~\cite{PhysRevResearch.3.023153}. We encode each input data into a three-qubit quantum state. For the noise magnitude setting, we set the lower bound of noise magnitude equal to zero, that is, $t=0$, and we generate three random variables from the uniform distribution $U(0,h)$ to randomize the quantum state. 
        
\begin{figure}[ht]
\centering
     \includegraphics[scale=0.34]{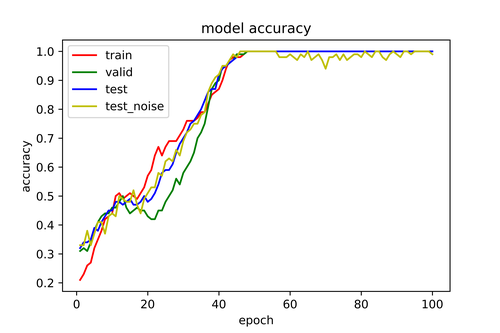}
        \caption{Model accuracy with varying noise magnitudes.}
\label{fig.magnitude1}
\end{figure}

From Fig.~\ref{fig.magnitude1}, the difference in testing accuracy (\textcolor{blue}{noiseless}/\textcolor{orange}{noisy}) with the same sample size is not obvious, where the performance gap will decrease if we enlarge the sample size. That is, the impact of noise on accuracy is negligible, and the model will be certified robust with the robustness bound deduced from Theorem \ref{theo3}. 

\begin{figure}[ht]
\centering
        \includegraphics[scale=0.34]{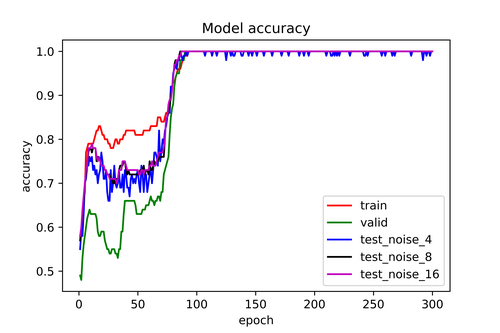}
        \caption{Model accuracy with varying noise magnitudes.}
\label{fig.magnitude2}
\end{figure}

Fig.~\ref{fig.magnitude2} shows the model accuracy with varying noise magnitudes. ``Test\_noise\_4,'' ``test\_noise\_8,'' and ``test\_noise\_16'' denote the testing accuracy with $h$ equal to $\frac{2\pi}{2^4}$, $\frac{2\pi}{2^8}$, and $\frac{2\pi}{2^{16}}$ respectively. The upper bound of noise magnitude will impact quantum classifiers' stability, but similarly, the impact can be diminished by enlarging the sample size.

\section{Conclusion}
In this research, we theoretically prove that by applying random rotation noise to input quantum states, the quantum classifier are able to defend against adversarial attack. In the theoretical development, we do not presume any architecture details of quantum classifiers. In other words, as long as the noise presence is accompanied in the training process, then any trained quantum classifier satisfies the above robustness guarantee. Besides, our method doesn't require to retrain classifiers, and the procedure simply needs to set rotation angles of quantum gates with random variables generated by a classical computer. This means that the computation complexity after adopting our algorithm is essentially the same as the original one. We will release our implementation for future study. 

\clearpage
\bibliographystyle{IEEEtran}
\bibliography{ref}
\clearpage
\appendix

\section*{Appendix}

\subsection*{A. Proof of Lemma~\ref{lemma1}}
\label{apx1}
\begin{proof}
Suppose that we add quantum rotation noise to the quantum classifier with quantum test state $\sigma$ and quantum gates $g= \prod\limits_{x_i}^{n} \cos\theta_{x_i}$. Given distinct $n$ numbers $\theta_{x_1}, \theta_{x_2}, \dots, \theta_{x_n}$ that satisify $0<h_1<\tan\theta_{x_i}<h_2$ for all $i\in\{1, 2, \dots, n\}$, then the corresponding probability of class K becomes the superposition of other input states, that is,\\
\begin{align}
 \label{eq:1}
    \overset{\sim}{y_k}(\sigma)&=g\hspace{0.5mm}y_k (\sigma)+g\sum_{{x_1}=1}^{n}\tan\theta_{x_1}y_k(\sigma_{x_1})\notag \\
    &\quad +g\sum_{x_1, x_2=1}^{n} \tan\theta_{x_1}\tan\theta_{x_2}y_k(\sigma_{x_1})y_k(\sigma_{x_2})+\cdots\notag \\
    &= gy_k(\sigma)+g\sum_{\ell=1}^n\{\prod_{i=1}^\ell\tan\theta_{x_i}y_k(\sigma_{x_i})\}.
\end{align}

Our objective is to prove that $\underset{k}{argmax}\hspace{1mm} \overset{\sim}y_k(\sigma) = C$.
Here we consider binary case. Without loss of generality, we assume that  $\hspace{1mm}\overset{\sim}y_1(\sigma)>\overset{\sim}y_2(\sigma)$. From equation \ref{eq:1}, we formulate the probabilities of label 1 and label 2 respectively, i.e.,

\begin{align}
 \label{eq:2}
    \overset{\sim}{y_1}(\sigma)=g\hspace{0.5mm}y_1(\sigma)+g\sum_{\ell=1}^n\{\prod_{i=1}^\ell\tan\theta_{x_i}y_1(\sigma_{x_i})\}.
\end{align}
\begin{align}
     \label{eq:3}
\overset{\sim}{y_2}(\sigma)=g\hspace{0.5mm}y_2(\sigma)+g\sum_{\ell=1}^n\{\prod_{i=1}^\ell\tan\theta_{x_i}y_2(\sigma_{x_i})\}.
\end{align}
Here, the relation, $\overset{\sim}{y_1}(\sigma)>\overset{\sim}{y_2}(\sigma)$, shall be required to prove our objective. With equation \ref{eq:2} and \ref{eq:3}, it can be rewritten to be 
\begin{align*}
     &g\hspace{0.5mm}y_1(\sigma)+g\sum_{\ell=1}^n\{\prod_{i=1}^\ell\tan\theta_{x_i}y_1(\sigma_{x_i})\}\\
     &\quad>g\hspace{0.5mm}y_2(\sigma)+g\sum_{\ell=1}^n\{\prod_{i=1}^\ell\tan\theta_{x_i}y_2(\sigma_{x_i})\}.
\end{align*}
We can further rearrange the inequality to be
\begin{align*}
\label{eq:4}
         y_1(\sigma)-y_2(\sigma)>&\sum_{\ell=1}^n\{\prod_{i=1}^\ell\tan\theta_{x_i}y_2(\sigma_{x_i})\} \\ &-\sum_{\ell=1}^n\{\prod_{i=1}^\ell\tan\theta_{x_i}y_1(\sigma_{x_i})\}.
\end{align*}
Consider worst case :  All\hspace{1mm}$ y_2(\sigma_{x_i})=1$, which leads that
\begin{align}
    \sum_{\ell=1}^n\{\prod_{i=1}^\ell\tan\theta_{x_i}y_1(\sigma_{x_i})\}=0
\end{align}
With the rearrangement,
\begin{align*}
    \sum_{\ell=1}^n\{\prod_{i=1}^\ell\tan\theta_{x_i}y_2(\sigma_{x_i})\}=\prod_{i=1}^{n}(1+\tan\theta_{x_i}y_2(\sigma_{x_i}))-1,
\end{align*}
we obtain the constrain, 
\begin{align*}
    (1+h_1)^n-1 &\leq
    \sum_{\ell=1}^n\{\prod_{i=1}^\ell\tan\theta_{x_i}y_2(\sigma_{x_i})\}\\
    &\leq(1+h_2)^n-1.
\end{align*}
Equation \ref{eq:4} becomes
\begin{align}
\label{eq:5}
    y_1(\sigma) - y_2(\sigma) \geq (1+h_1)^n-1.
\end{align}
Combining equation \ref{eq:5} and the property,
$1=y_1(\sigma)+y_2(\sigma)$, we can derive that
\begin{align*}
     2y_1(\sigma)>(1+h_1)^n \Rightarrow y_1(\sigma)>\frac{(1+h_1)^n}{2}.
\end{align*}
\\
Therefore, if there is a positive number $h$ satisfies  
$$
y_C(\sigma)>\frac{(1+h)^n}{2},
$$
then  $C=\underset{k}{argmax}\hspace{1mm}\overset{\sim}{y_k}(\sigma)$.
\end{proof}

\end{document}